\newcommand\numberthis{\addtocounter{equation}{1}\tag{\theequation}}
\newtheorem{theorem}{Theorem}
\theoremstyle{plain}
\theoremstyle{definition}
\newtheorem{lemma}[theorem]{Lemma}
\DeclareMathOperator\Hy{\mathcal{H}}
\DeclareMathOperator\Var{Var}
\DeclareMathOperator\E{\mathbb{E}}
\DeclareMathOperator\Pow{Pow}
\DeclareMathOperator\pr{P}
\begin{document}

\title{Optimal design of the Wilcoxon-Mann-Whitney-test}
\author{Paul-Christian B\"urkner, Philipp Doebler, and Heinz Holling}

\date{Email: paul.buerkner@gmail.com}

\maketitle


\begin{abstract}
In scientific research, many hypotheses relate to the comparison of two independent groups. Usually, it is of interest to use a design (i.e., the allocation of sample sizes $m$ and $n$ for fixed $N = m + n$) that maximizes the power of the applied statistical test. It is known that the two-sample t-tests for homogeneous and heterogeneous variances may lose substantial power when variances are unequal but equally large samples are used. We demonstrate that this is not the case for the non-parametric Wilcoxon-Mann-Whitney-test, whose application in biometrical research fields is motivated by two examples from cancer research. We prove the optimality of the design $m = n$ in case of symmetric and identically shaped distributions using normal approximations and show that this design generally offers power only negligibly lower than the optimal design for a wide range of distributions. Please cite this paper as published in the Biometrical Journal (\url{https://doi.org/10.1002/bimj.201600022}). \\

Keywords: Optimal design; statistical power; Wilcoxon-Mann-Whitney-test.
\end{abstract}

\maketitle                   






\section{Introduction}

The comparison of two independent samples can be considered one of the most widespread applications of statistics, being used for instance in medicine, biology, neuroscience and psychology. For normally distributed data, t-tests are applied to check whether the difference between the samples is significant. When variances turn out to be equal across samples, the t-test for homogeneous variances ($T_{hom}$) is used. When variances are unequal, the t-test for heterogeneous variances ($T_{het}$; \citealp{Welch1938}) is the preferred method. In scientific experiments, researchers are free to choose how many subjects to allocate to the first and second sample respectively, while the total sample size $N$ is commonly limited due to time, money, or ethical restrictions. Almost always, researchers will use equally large sample sizes $m$ and $n$ for both groups, because they were taught doing it so or maybe just because `it feels right'. Indeed, $T_{hom}$ has the highest power if $m = n$, assuming normally distributed samples and equally large variances (e.g., \citealp{holling2013}). Less commonly known however, this is not the case for $T_{het}$. Here, the sample size should be higher for the sample with higher variance (again assuming normally distributed data), increasing proportionally with the ratio of the standard deviations \citep{dettemunk1997,dettebrien2004}. Furthermore, using non-optimal sample size allocations may result in substantial loss of power as compared to the optimum \citep{dettebrien2004}. Unfortunately, it is usually unclear a-priori, how much (if at all) variances will actually differ. For this reason, equally large sample sizes might be used nevertheless at the cost of potentially losing non-negligible amount of power. 

If the data are not normally distributed, t-tests may be invalid, especially for small samples. In this case, the Wilcoxon-Mann-Whitney-test proposed by \cite{wilcoxon1945} and \cite{mann1947} -- being arguably one of the most widely used non-parametric tests developed so far -- is a powerful alternative. It is applied in cancer research (e.g, \citealp{arap1998, sandler2003}), virology (e.g., \citealp{dibisceglie1989, misiani1994}), neuroscience (e.g.,      \citealp{shen2008}), and clinical psychology (e.g., \citealp{lanquillon2000}), to mention only a few areas of application and related studies. The study of \cite{sandler2003} and another cancer study by \cite{epping1994} are discussed in more detail in Section 2.2, to explain the usefulness of the Wilcoxon-Mann-Whitney-test for biometrical research. Generally, \citet[p. 1]{hollander2013} define a \emph{non-parametric} method as ``a statistical procedure that has certain desirable properties that hold under relatively mild assumptions regarding the underlying populations from which the data are obtained''. The key aspect of most non-parametric methods is that they are distribution free. That is they are valid for samples from a wide range of distributions and therefore no (or fewer) assumptions on the distribution of the data, in particular no normality assumption, have to be made (c.f., \citealp{agresti2013, hollander2013, sprent2007, brunner2002}).

Considerable amount of research has been conducted on the optimal designs of classical parametric methods such as t-tests (see above), linear and generalized linear models (e.g., see \citealp{berger2009}; \citealp{atkinson2007}). However, it appears that alternative non-parametric methods are less well represented in the optimal design literature. In particular, the optimal design of the Wilcoxon-Mann-Whitney-test has not yet been investigated so far, despite its broad area of application.

In the following, we introduce some notation that is used throughout the paper. Let $X$ and $Y$ be two independent random variables with distributions $F$ and $G$ from which we take $m$ and $n$ independent realizations. In case of two-sample tests, an exact (experimental) design is the allocation of sample sizes $m$ and $n$, while keeping $N = m + n$ constant. Define $m := \omega N$ as well as $n := (1-\omega)N$ for $\omega \in [0,1]$. Using $\omega$ instead of $m$ and $n$, every design can be symbolized by a number in $[0,1]$ independent of $N$. Of course, not all values $\omega \in [0,1]$ can be realized in practice, since $m$ and $n$ must be natural numbers, and hence $\omega$ is called an approximate design \citep{berger2009}. 

The null hypothesis considered in this paper is
\begin{equation}
\label{H0}
 \Hy_0: G(x) = F(x)
\end{equation} 
implying an identical distribution underlying both samples. In the most general case, the alternative hypothesis can be defined as
\begin{equation}
\label{H1}
 \Hy_1: G(x) \neq F(x),
\end{equation} 
although it is often useful and necessary to make the $\Hy_1$ more specific, in order to ensure good properties of the tests (see next section). The interpretation of a statistical tests' outcome depends heavily on the underlying pair of hypotheses. Accordingly, they should always be specified with care.

The \emph{power} of a test is the probability that $\Hy_0$ is rejected when $\Hy_1$ is actually true. In the present paper, an \emph{optimal design}, denoted as $\omega^*$, is a design that maximizes the power of a test for given $N$, $F$, and $G$. In case of t-tests, maximizing the power is equivalent to achieving D-optimality (cf. \citealp{berger2009}; \citealp{atkinson2007}), which is the most widely applied optimal design criterion. However, this equivalence does not hold for the Wilcoxon-Mann-Whitney-test discussed in this paper, so that optimality is defined by maximal power here.

As stated above, it is known that $T_{hom}$ has its optimal design exactly at $\omega^* = 0.5$ (e.g., \citealp{holling2013}), when $F$ and $G$ are normal with equal variances. That is, one should assign equal sample sizes to both samples. However, when $F$ and $G$ have unequal variances $\sigma_1^2$ and $\sigma_2^2$, $T_{het}$ has an locally optimal design which is close to $\omega^* = \frac{1}{1+\sigma_2/\sigma_1}$ \citep{dettemunk1997,dettebrien2004}. That is, the sample size should be higher for the sample with higher variance. 
It is now of interest to find the optimal design for the Wilcoxon-Mann-Whitney-test, which serves as a powerful non-parametric alternative to the classical t-tests.
 
\section{Wilcoxon-Mann-Whitney-test}

The test statistic $U_{mn}$ of the Wilcoxon-Mann-Whitney-test (in the following abbreviated as $T_U$) is defined as
\begin{equation}
  U_{mn} := \sum_{i=1}^m{\sum_{j=1}^n{\chi(x_i,y_j)}}
\end{equation}
with
\begin{equation}
 \chi(x_i,y_j) := \Bigg\{  \begin{array}{c} 
1 \text{ if } x_i \geq y_j \\
0 \text{ if } x_i < y_j.\\
\end{array}
\end{equation}

Under $\Hy_0$, the exact distribution of $U_{mn}$ is known and can be calculated using a recursive formula \citep{mann1947}. This recursive formula further allows to calculate the central moments of $U_{mn}$. The mean and the variance for continuous $F$ and $G$ under $\Hy_0$ are given by
\begin{equation}
\label{MVU0}
  \E_0(U_{mn}) = \frac{mn}{2} \qquad \Var_0(U_{mn}) = \frac{mn(m+n+1)}{12}.
\end{equation} 
The original $\Hy_1$ for $T_U$, often called stochastic ordering hypothesis \citep{fay2010}, states that one of the two random variables is stochastically larger than the other assuming the overall shape of the distributions to be the same. That is there is an $a \neq 0$ such that
\begin{equation}
\label{SH1}
 \Hy_1: G(x) = F(x + a).
\end{equation}
When using the above $\Hy_1$, $T_U$ is consistent \citep{mann1947} as well as unbiased for any one-sided hypothesis (i.e., $a > 0$ or $a <0$; \citealp{lehmann1951}; \citealp{van1950}). Unbiased means that the test is less likely to reject the $\Hy_0$ when it is true than when any other hypothesis is true. This property may not hold in the two-sided case \citep{van1950}. If samples are normal with equal variances, $T_{hom}$ is uniformly most powerful among all unbiased tests (e.g., \citealp{lehmann2006}). For this case of homogeneous variances, \cite{mood1954} has shown that the asymptotic efficiency of $T_U$ relative to $T_{hom}$ is $3/\pi \approx 0.955$. This is quite large given that $T_U$ has higher power than $T_{hom}$ in many non-normal situations, even for $N \rightarrow \infty$ \citep{hodges1956, blair1980, sawilowsky1992, fay2010}. 

When $F \neq G$, we have no recursion available to determine the distribution of $U_{mn}$, but it is nevertheless possible to calculate general formulae for the mean and the variance of $U_{mn}$. In the following, define $\pr(X \geq Y)$ as the probability of $X$ exceeding $Y$, that is the probability that some realization $x$ of $X$ is greater or equal to some realization $y$ of $Y$. Furthermore, define $\tilde{F}(x) := \pr(X < x)$ (note that $\tilde{F} = F$ if $F$ is continuous). \\

\begin{lemma} Let $F$ and $G$ be some arbitrary distributions with densities $f$ and $g$.
\label{MVU}
\begin{enumerate}[(i)]
\item We have
	\begin{equation*}
	\pr(X \geq Y) = \int{G(x)f(x)dx}
	\end{equation*}
where 

\item The mean and the variance of $U_{mn}$ can be written as
\begin{align*}
    \E(U_{mn}) &= mn \pr(X \geq Y) = \omega (1-\omega)N^2 \pr(X \geq Y), \\
    \Var(U_{mn}) &= mn \bigg( \pr(X \geq Y) -(m+n-1)\pr(X \geq Y)^2 \\
		&\quad +(n-1)\int{G(x)^2f(x)dx} +(m-1)\int{(1-\tilde{F}(x))^2g(x)dx} \bigg). \\
		&= \omega (1-\omega)N^2 \bigg( \pr(X \geq Y) -(N-1)\pr(X \geq Y)^2 \\
		&\quad +((1-\omega) N -1)\int{G(x)^2f(x)dx} +(\omega N -1)\int{(1-\tilde{F}(x))^2g(x)dx} \bigg). 
	\end{align*}
\item If $F$ and $G$ are symmetric with $G(x) = F(x+a)$ for $a \in \mathbb{R}$ it holds that
\begin{equation*}
  \int{G(x)^2f(x)dx} = \int{(1-\tilde{F}(x))^2g(x)dx}.
\end{equation*}	
\end{enumerate}
\end{lemma}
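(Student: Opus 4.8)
\emph{Overview and part (i).} The plan is to treat the three parts separately: (i) and (ii) by conditioning and by expanding a double sum of indicators, and (iii) by a reflection argument. For (i), condition on $X$: since $X$ and $Y$ are independent and $G$ is continuous, $\pr(X \ge Y \mid X = x) = \pr(Y \le x) = G(x)$, whence $\pr(X \ge Y) = \E[G(X)] = \int G(x) f(x)\, dx$.

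\emph{Part (ii).} The mean is immediate from linearity: $\E(U_{mn}) = \sum_{i=1}^m \sum_{j=1}^n \E[\chi(x_i,y_j)] = mn\,\pr(X \ge Y)$, and the $\omega$-form follows from $m = \omega N$, $n = (1-\omega)N$. For the variance I would write $\Var(U_{mn}) = \sum_{i,j}\sum_{k,l}\Cov(\chi(x_i,y_j),\chi(x_k,y_l))$ and classify the pairs $((i,j),(k,l))$ by the number of shared coordinates. When $i \ne k$ and $j \ne l$ the indicators are independent, contributing $0$. The $mn$ diagonal terms contribute $\pr(X\ge Y) - \pr(X\ge Y)^2$ each. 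The $mn(n-1)$ terms with $i=k$, $j \ne l$ each contribute $\pr(Y' \le X,\, Y'' \le X) - \pr(X\ge Y)^2 = \int G(x)^2 f(x)\,dx - \pr(X\ge Y)^2$ (condition on $X$; $Y',Y''$ independent copies). The $mn(m-1)$ terms with $i \ne k$, $j = l$ each contribute $\pr(X' \ge Y,\, X'' \ge Y) - \pr(X\ge Y)^2 = \int (1-\tilde F(x))^2 g(x)\,dx - \pr(X\ge Y)^2$ (condition on $Y$; here $\pr(X' \ge x) = 1 - \tilde F(x)$). Summing, the coefficient of $\pr(X\ge Y)^2$ is $-\bigl(1 + (n-1) + (m-1)\bigr) = -(m+n-1)$; factoring $mn$ out front gives the stated formula, and the $\omega$-version follows by the same substitution.

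\emph{Part (iii).} Both integrals are unchanged under a common translation of $F$ and $G$, so I would first recenter so that $F$ is symmetric about $a/2$, which forces $G(x) = F(x+a)$ to be symmetric about $-a/2$. Using $F(a-u) = 1 - F(u)$ one then verifies $\pr(-X \le x) = 1 - F(-x) = F(x+a) = G(x)$, i.e. $-X \de Y$, and symmetrically $-Y \de X$. Next, by the conditioning of part (i) and $\tilde F = F$, rewrite the two integrals as $\int G(x)^2 f(x)\,dx = \pr\bigl(\max(Y',Y'') \le X\bigr)$ and $\int (1-\tilde F(x))^2 g(x)\,dx = \pr\bigl(\min(X',X'') \ge Y\bigr)$, where $X,X',X''$ are independent with distribution $F$ and $Y,Y',Y''$ independent with distribution $G$. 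Finally, apply the map $(X',X'',Y) \mapsto (-X',-X'',-Y)$: it turns $\{\min(X',X'') \ge Y\}$ into $\{\max(-X',-X'') \le -Y\}$, and since $(-X',-X'',-Y) \de (Y',Y'',X)$ by the two distributional identities above, this probability equals $\pr(\max(Y',Y'') \le X)$, which is the first integral.

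\emph{Main obstacle.} Parts (i) and (ii) are bookkeeping; the only delicate point in (ii) is counting the index pairs in each case and checking the resulting coefficient $-(m+n-1)$. The real content is (iii): the reflection argument only works once the distributions have been centered so that $x \mapsto -x$ interchanges $F$ and $G$, so the step that needs care is justifying that recentering (it leaves both integrals invariant) and deriving the identities $-X \de Y$ and $-Y \de X$ from the symmetry and the shift relation.
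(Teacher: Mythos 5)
Your proof is correct. For parts (i) and (ii) the paper gives no argument of its own --- it simply cites the appendix of Lehmann and D'Abrera --- so your conditioning step for (i) and the covariance decomposition by shared indices for (ii), with the counts $mn$, $mn(n-1)$, $mn(m-1)$ and the resulting coefficient $-(m+n-1)$, supply a self-contained derivation that the paper omits; the bookkeeping checks out. For part (iii) your argument is at bottom the same reflection-after-centering idea as the paper's: the paper normalizes $\E(X)=0$ and works directly with densities, obtaining $f(x)=g(-a-x)$ and $G(x)=1-\tilde F(-a-x)$ so that the two integrands coincide after the substitution $x \mapsto -a-x$, whereas you center $F$ at $a/2$ and $G$ at $-a/2$, recast the integrals as $\pr(\max(Y',Y'')\le X)$ and $\pr(\min(X',X'')\ge Y)$, and perform the reflection at the level of random variables via $-X \de Y$ and $-Y \de X$. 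The two are equivalent; the paper's density-level version is shorter, while yours makes the probabilistic meaning of the two integrals explicit. One small caveat: the appendix states the lemma for possibly discrete distributions, and identities such as $\pr(-X\le x)=1-F(-x)$ in your part (iii) silently use continuity where the paper is careful to write $\tilde F$; for the continuous case relevant to Theorem~1 this is immaterial.
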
 
The proofs of Lemma \ref{MVU} $(i)$ and $(ii)$ can be found in the Appendix of \cite{lehmann2006np}. The full version of Lemma \ref{MVU} as well as the remaining proofs can be found in Appendix A of the present paper. The asymptotic normality of $U_{mn}$ under $\Hy_0$ was originally proven by \cite{mann1947}. Expanding this result, \cite{lehmann1951} used a theorem of \cite{hoeffding1948} to prove the general asymptotic normality (holding also under $\Hy_1$) for a large class of estimators, including $U_{mn}$, under the assumption
\begin{equation}
\label{cons_res}
 m/n = \text{constant } \text{as } N \rightarrow \infty.
\end{equation}  

\subsection{Optimal design of $T_U$ for the stochastic ordering hypothesis}
 
If we assume the stochastic ordering hypothesis (\ref{SH1}) to be true and focus on \emph{symmetric} distributions only, we are able to find the optimal design of $T_U$ analytically at least for larger sample sizes. \\

\begin{theorem}
\label{theorem_sym}
  Consider all designs $\omega \in [\varepsilon, 1 - \varepsilon]$ for any fixed $\varepsilon \in (0,0.5)$ and let $N$ be sufficiently large so that $U_{mn}$ is approximately normal for all those designs. Then, for symmetric continuous distributions $F$ and $G$ with $G(x) = F(x + a)$ for some $a \neq 0$, the optimal design is given if $\omega^* = 0.5$.
\end{theorem}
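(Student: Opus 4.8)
The plan is to substitute the normal approximation into the definition of the power and then show that, because of symmetry, the resulting expression depends on the design only through $\sqrt{\omega(1-\omega)}$ and is monotone in it. First I would write down the approximate power of the level-$\alpha$ (two-sided) test. Under $\Hy_0$, $U_{mn}$ is approximately $\mathcal N\!\big(mn/2,\Var_0(U_{mn})\big)$ with $\Var_0(U_{mn})$ as in \eqref{MVU0}, so the test rejects when $|U_{mn}-mn/2|>z_{1-\alpha/2}\sqrt{\Var_0(U_{mn})}$. Under $\Hy_1$, the Hoeffding--Lehmann asymptotic normality quoted above gives $U_{mn}\approx\mathcal N\!\big(\E(U_{mn}),\Var(U_{mn})\big)$ with $\E(U_{mn})=mn\,p$, where $p:=\pr(X\ge Y)$ does not depend on $\omega$, and $\Var(U_{mn})$ from Lemma~\ref{MVU}(ii). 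Hence the power is approximately
\[
\beta(\omega)=\Phi\!\left(\frac{mn(p-\tfrac12)-z_{1-\alpha/2}\sqrt{\Var_0(U_{mn})}}{\sqrt{\Var(U_{mn})}}\right)+\Phi\!\left(\frac{-mn(p-\tfrac12)-z_{1-\alpha/2}\sqrt{\Var_0(U_{mn})}}{\sqrt{\Var(U_{mn})}}\right).
\]

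Next I would make the $\omega$-dependence explicit using symmetry. Putting $m=\omega N,\ n=(1-\omega)N$ gives $mn=\omega(1-\omega)N^2$; and --- the key point --- Lemma~\ref{MVU}(iii) shows the two integrals in $\Var(U_{mn})$ coincide with a common constant $c$, so $\big((1-\omega)N-1\big)c+\big(\omega N-1\big)c=(N-2)c$ is free of $\omega$. Therefore $\Var(U_{mn})=\omega(1-\omega)N^2\,V$ with $V:=p-(N-1)p^2+(N-2)c>0$, and $\Var_0(U_{mn})=\omega(1-\omega)N^2\,V_0$ with $V_0:=\tfrac{N+1}{12}$, both $V$ and $V_0$ independent of $\omega$. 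Cancelling the common factor $\sqrt{\omega(1-\omega)}\,N$ from numerator and denominator, $\beta$ depends on $\omega$ only through $s:=\sqrt{\omega(1-\omega)}$:
\[
\beta(\omega)=\Phi\!\left(\frac{sN(p-\tfrac12)-z_{1-\alpha/2}\sqrt{V_0}}{\sqrt{V}}\right)+\Phi\!\left(\frac{-sN(p-\tfrac12)-z_{1-\alpha/2}\sqrt{V_0}}{\sqrt{V}}\right).
\]

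Finally I would verify monotonicity and conclude. Since $a\neq0$ we have $p\neq\tfrac12$, and $\beta$ plainly depends only on $|p-\tfrac12|$ and $s$; write $u:=sN|p-\tfrac12|\ge0$ and $b:=z_{1-\alpha/2}\sqrt{V_0}>0$. Differentiating, $\tfrac{d\beta}{du}\ \propto\ \phi\!\big(\tfrac{u-b}{\sqrt{V}}\big)-\phi\!\big(\tfrac{u+b}{\sqrt{V}}\big)\ \ge\ 0$, because $|u-b|\le u+b$ for $u,b\ge0$ and $\phi$ is strictly decreasing in $|\cdot|$; hence $\beta$ is non-decreasing in $s$. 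As $s=\sqrt{\omega(1-\omega)}$ attains its maximum over $[\varepsilon,1-\varepsilon]$ --- indeed over all of $[0,1]$ --- uniquely at $\omega=\tfrac12$, the optimal design is $\omega^*=0.5$. The main obstacle is the middle step: without symmetry the bracket in $\Var(U_{mn})$ keeps a genuine $\omega$-dependence through the asymmetry of the two integrals, so $\beta$ is no longer a function of $\omega(1-\omega)$ alone and the optimum shifts (just as for $T_{het}$); recognizing that Lemma~\ref{MVU}(iii) is precisely what removes this dependence is the crux. The monotonicity step also needs a little care, since raising $s$ improves the first tail term but worsens the (much smaller) second one, so one must check the net effect.
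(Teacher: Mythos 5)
Your proof is correct and follows essentially the same route as the paper's: the crux in both is Lemma~\ref{MVU}(iii), which makes the two integrals in $\Var(U_{mn})$ coincide so that the standardized variance is free of $\omega$ while the standardized mean shift scales as $\sqrt{\omega(1-\omega)}$. Your explicit check that the two-sided power is monotone in the noncentrality (via $\phi\bigl(\tfrac{u-b}{\sqrt{V}}\bigr)\ge\phi\bigl(\tfrac{u+b}{\sqrt{V}}\bigr)$) is a small step the paper leaves implicit, but it does not change the argument.
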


\begin{proof}
 We write $U_N(\omega)$ instead of $U_{mn}$ to make the dependency on $\omega$ explicit. Transform $U_N(\omega)$ so that it has mean zero and variance one under $\Hy_0$. Applying the same transformation to $U_N(\omega)$ under $\Hy_1$, we arrive at
\begin{equation}
\label{mu_sigma}
 \mu_N(\omega) := \frac{\E(U_N(\omega))-\E_0(U_N(\omega))}{\sqrt{\Var_0(U_N(\omega))}} \text{ and } \sigma_N^2(\omega) := \frac{\Var(U_N(\omega))}{\Var_0(U_N(\omega))}
\end{equation} 
as the transformed mean and variance, respectively. As $F$ and $G$ are symmetric and continuous with $G(x) = F(x+a)$ for some $a \neq 0$, then from Lemma \ref{MVU} $(iii)$ and the definitions in (\ref{mu_sigma}) we conclude
\begin{equation}
 \mu_N(\omega) = \frac{\omega(1-\omega)N^2(\pr(X \geq Y)-1/2)}  
  {\sqrt{\omega(1-\omega)N^2(N+1)/12}}
  = \frac{\sqrt{\omega(1-\omega)}N(\pr(X \geq Y)-1/2)}  
  {\sqrt{(N+1)/12}}
\end{equation}
and
\begin{align*}
\label{sigma_simple}
\sigma^2_N(\omega) &= \frac{\omega(1-\omega)N^2(\pr(X \geq Y)-(N-1)\pr(X \geq Y)^2 + (N-2)\int{G(x)^2f(x)dx})}{\omega(1-\omega)N^2(N+1)/12}  \\
&= \frac{\pr(X \geq Y)-(N-1)\pr(X \geq Y)^2 + (N-2)\int{G(x)^2f(x)dx}}{(N+1)/12}. \numberthis
\end{align*}
We see that $\sigma^2_N$ is independent of $\omega$ and $\mu_N$ only depends on $\omega$ through $\sqrt{\omega(1-\omega)}$, which is maximized at $\omega=0.5$. Thus, $\omega^* = 0.5$.
\end{proof}


Next we investigate the deficiency $D$, which corresponds to the percentage of additional sample size needed so that a given design offers as much power as the optimal design. We see from (\ref{sigma_simple}) that for larger $N$ the variance $\sigma^2_N$ is approximately independent of $N$. Thus, two designs have approximately the same power if their corresponding means $\mu_N(\omega)$ are identical. For larger $N$, this leads to the equation
\begin{equation}
 \sqrt{\omega(1-\omega)N(1+D)} = \sqrt{\frac{N}{4}}
\end{equation}
with solution 
\begin{equation}
  D(\omega) = \frac{1}{4\omega(1-\omega)} - 1.
\end{equation}
The graph of the deficiency is displayed in Figure~\ref{fig:deficiency}. Interestingly, the same deficiency formula can be found for $T_{hom}$ (\citealp{holling2013}).

The independence of $\sigma^2_N(\omega)$ on $\omega$ will generally not hold if $F$ and $G$ are asymmetric or if the stochastic ordering hypothesis is not satisfied, that is if $F$ and $G$ differ in their overall shapes. These cases are considered in the next section.  
 
\subsection{Optimal design of $T_U$ for the general alternative hypothesis}
\label{section_general_hyp}

In the following, we will use the general $\Hy_1$ from equation (\ref{H1}). Although $T_U$ was originally defined only under the stochastic ordering hypothesis (\ref{SH1}) \citep{mann1947, fay2010}, many practically relevant cases, such as the comparison of two normal distributions with unequal variances or differently skewed distributions fall outside its scope. The latter appears quite frequently in biometrical research fields, for instance, in cancer research and we want to present two case examples here. 

\cite{sandler2003} investigated the effect of regular aspirin to prevent the recurrence of colorectal cancer by comparing a group receiving a small daily doses of aspirin to an equally large placebo control group. Among others, they analyzed the number of adenomas detected after a certain time period as well as the recurrence time   of adenomas. Both types of variables are skewed if values are close to zero, because they are naturally bound there. The distribution of the number of detected adenomas in both groups were reported by \cite{sandler2003}. Most subjects did not have any adenomas ($83\%$ in the experimental group vs. $73\%$ in the control group), some had one ($10\%$ vs. $14\%$) or two ($3\%$ vs. $7\%$) and even fewer and three or more ($3\%$ vs. $5\%$). It is immediately evident that the distributions are highly skewed so that the application of t-tests might be at least questionable. \cite{sandler2003} thus compared the number of adenomas in both groups using $T_U$ and found that the experimental group had significantly fewer adenomas. 

In another study, \cite{epping1994} investigated the effect of cognitive avoidance on cancer recovery, by comparing the group of patients successfully recovering from cancer to those who failed to recover. The distribution of cognitive avoidance in both groups is displayed in Figure~\ref{fig:avoidance}. While the distribution of patients not recovering is relatively symmetric, it is clearly right skewed for patients who did recover. Among others, the optimal design of $T_U$ for this special case is examined below. Again, the application of t-tests is questionable here and $T_U$ may be a sensible alternative. These examples show the relevance of the Wilcoxon-Mann-Whitney-test in biometrical applications. Thus, it is of interest to investigate its optimal design when distributions are known to be skewed and / or have unequal variances. 

Using the normal approximation, the power of $T_U$ can be approximated as well. In the one-sided case, if the alternative hypothesis is 
$\pr(X \geq Y) > 0.5$ (equivalent to $a > 0$ under $\Hy_1$ (\ref{SH1}) in case of continuous $X$ and $Y$), we have
\begin{equation}
\label{pow.one}
  \Pow_N(\omega) = 1 - \Phi\left(\frac{z_{1-\alpha}-\mu_N(\omega)}{ \sigma_N(\omega)} \right).
\end{equation}
In the two-sided case $\pr(X \geq Y) \neq 0.5$ (equivalent to $a \neq 0$ under $\Hy_1$ (\ref{SH1}) in case of continuous $X$ and $Y$) we have
\begin{equation}
\label{pow.two}
  \Pow_N(\omega) = \Phi\left(\frac{z_{\alpha/2} - \mu_N(\omega)}{ \sigma_N(\omega)} \right) - \Phi\left(\frac{z_{1-\alpha/2} - \mu_N(\omega)}{\sigma_N(\omega)} \right) + 1.
\end{equation}
Here, $\alpha$ denotes the nominal $\alpha$-level of the test, $\Phi$ the standard normal distribution function, and $z_{\alpha}$ the $\alpha$ quantile of the standard normal distribution. Unfortunately, finding the maximum of $\Pow_N$ in $\omega \in [0,1]$ analytically turns out to be unfeasible. For this reason, we present some examples for common distributions below and focus on the one-sided version of $T_U$, as the two-sided case yields nothing new concerning the optimal design despite a reduced power in general.

One of the most common cases of two-sample comparisons occurring for real data, which is not covered by Theorem~\ref{theorem_sym}, is the comparison of two normal distributions with unequal variances. The optimal design of $T_U$ was investigated for variance ratios ranging from $1/9$ to $9$ (or equivalently $1/3$ to $3$ in terms of standard deviation ratios), as the vast majority of real data comparisons can be expected to fall within this interval. Figure~\ref{fig:dens} (a) provides an overview on the applied normal densities. Recall that, in case of two normal samples with unequal variances, the optimal design of $T_{het}$ assigns a higher sample size to the sample with higher variance. Also, one may lose substantial power when variances are unequal but equally large samples are used \citep{dettebrien2004}. Using the asymptotic power function (\ref{pow.two}), a different pattern can be found for $T_U$ (see Figure~\ref{fig:power_TU} (a) and (b) and (c)): When $F$ and $G$ are normal with $\sigma_1^2 \neq \sigma_2^2$, the optimal design $\omega^*$, quite surprisingly, does not always favor the sample with the higher variance. Instead, the pattern appears to be more complex as displayed in Figure~\ref{fig:power_TU} (a). Furthermore, $\omega^*$ depends on the total sample size $N$ (see Figure~\ref{fig:power_TU} (b)) in a way that higher values of $N$ lead to slightly higher values of $\omega^*$ at least for the presented examples. Also, the optimal design depends very slightly on the $\alpha$-level (see Figure~\ref{fig:power_TU} (c)). To demonstrate that these findings are not just artifacts of the approximations, Figure~\ref{fig:power_TU} contains the simulated power based on 10,000 trials (dashed lines) along with the approximated power. From our perspective, the most important observation is that one generally loses only little power when choosing $\omega = 0.5$ as compared to the optimal design $w^*$. To illustrate that this property may indeed not hold for $T_{het}$ when variances differ considerably (e.g., for a standard deviation ratio of $3$), its simulated power is also displayed in Figure~\ref{fig:power_TU} (a), (b), and (c) as grey lines.

Often enough, real data are not normally distributed, but skewed in some way as shown for the cancer studies of \cite{sandler2003} and \cite{epping1994}. Even though the central limit theorem ensures normally distributed means for $N \rightarrow \infty$ and thus the validity of the two-sample t-tests, sample sizes may be too small in many cases to ensure sufficient convergence to normality. Furthermore, even for larger (or infinitely large) samples, $T_U$ will be more efficient than the t-tests for many skewed distributions \citep{hodges1956, blair1980, sawilowsky1992}. In fact, this can be considered one of the main reasons to apply the Wilcoxon-Mann-Whitney-test. Accordingly, the optimal design of $T_U$ for skewed distributions is of primary interest. In the absence of any general analytic solution for this case, we will again investigate the optimal design for selected examples. We chose to use exponential, log-normal and $\chi^2$-distributions to represent different shapes and amounts of skewness that are typically present in real data (e.g., for reaction or survival times). We decided to include only unimodal distributions, as multimodal distributions appear pretty rarely and usually indicate that different populations, each having a unimodal distributions, were mixed up (something that should be avoided to allow clear interpretation of scientific results). First, consider the canonical case of skewed distributions with the same shape but shifted mean, which, by definition, satisfy the stochastic ordering hypothesis (\ref{SH1}). See Figure~\ref{fig:dens} (b) for a visualization of their densities. As can be seen from the power functions displayed in Figure~\ref{fig:power_TU} (d) and (e), $\omega^*$ varies with the degree of the shift, the total sample size, and (slightly) with the amount of skewness. Again however, $\omega = 0.5$ is generally nearly as good as $w^*$. Leaving the stochastic ordering hypothesis, distributions with different amount of skewness were also compared to each other (densities are displayed in Figure~\ref{fig:dens} (c) and (d)). From the power functions in Figure~\ref{fig:power_TU} (f), (g), and (h), it is immediately evident that $\omega = 0.5$ is again nearly optimal for all displayed comparisons.

To end this section, we want to briefly and exemplary discuss the optimal design of $T_U$ for the cancer study of \cite{epping1994} mentioned above. Overall, $67$ patients participated in the study. The distribution of patients successfully recovering from cancer can be nicely approximated by a $\chi^2$ distribution with $14$ degrees of freedom, whereas the distribution of patients who failed to recover is similar to a Student-t distribution with $3$ degrees of freedom, location parameter of $17$, and scale parameter of $2.8$ (see Figure~\ref{fig:example_approx_dens} for an illustration of the corresponding densities). When applying $T_U$ under these conditions to test whether the first group has smaller values than the second, we find the optimal design to be roughly $\omega^* \approx 0.45$ (see Figure \ref{fig:example_power}) that is around $67 \times 0.45 \approx 30$ people should be assigned to the first group. Note that again, almost no power is lost when using equally large groups.

\section{Conclusion}

In the present paper, we demonstrated that assigning equally large sample sizes to both groups is generally a very good design for the Wilcoxon-Mann-Whitney-test. Using normal approximations, its optimality was proved for symmetric and identically shaped distributions. For a range of other distributions, we could show that $\omega = 0.5$ offers power only negligibly lower than the optimal design that varies with the underlying distributions, the total sample size and the $\alpha$-level. Thus, in contrast to the t-tests for homogeneous and heterogeneous variances, equally large sample sizes can be used without the risk of substantially losing power. In this sense, the Wilcoxon-Mann-Whitney-test is not only valid and powerful for a wide range of distributions because of its non-parametric nature, but also offers a robust experimental design to be applied on.

The examples presented in the figures are only a subset of the cases we analyzed in order to come to our conclusions. However, since $F$ and $G$ come from an infinite space of distributions, there might still be relevant combinations, for which the design $\omega = 0.5$ has considerably lower power than the optimal design, even for medium to large samples sizes. Ideally, we also wanted to provide a closed form of the optimal design for arbitrary $F$ and $G$. However, even when the (asymptotic) distribution under $\Hy_1$ is known, it is practically unfeasible to solve for $\omega^*$ if $F$ and $G$ are not assumed to be symmetric and identically shaped. If one has a clear idea on how the data will be distributed and wants to get a more precise estimation of the optimal design (instead of just applying $\omega = 0.5$), numerical optimization of the power function still appears to be the best option. \\



\section*{Appendix}

\subsection*{A: Extension and proof of Lemma \ref{MVU}}

Let $F$ and $G$ be some arbitrary (discrete or continuous) distributions with densities $f$ and $g$.
\begin{enumerate}[(i)]
\item We have
	\begin{equation*}
	\pr(X \geq Y) = \int{G(x)f(x)dx}.
	\end{equation*}
	
\item The mean and the variance of $U_{mn}$ can be written as
\begin{align*}
    \E(U_{mn}) &= mn \pr(X \geq Y), \\
    \Var(U_{mn})  &= nm \bigg( \pr(X \geq Y) -(n+m-1)\pr(X \geq Y)^2 \\
		&\quad +(n-1)\int{G(x)^2f(x)dx} +(m-1)\int{(1-\tilde{F}(x))^2g(x)dx} \bigg).  
	\end{align*}
	
\item If $F$ and $G$ are symmetric with $G(x) = F(x+a)$ for fixed $a \in \mathbb{R}$ it holds that
\begin{equation*}
  \int{G(x)^2f(x)dx} = \int{(1-\tilde{F}(x))^2g(x)dx}.
\end{equation*}

\item (supplemental) For continuous $F$ and $G$ under the null hypothesis $F = G$, it holds that
	\begin{equation*}
		\E(U_{mn})  = \E_0(U_{mn}) \text{ and } \Var(U_{mn}) = \Var_0(U_{mn}) 
	\end{equation*}
	that is the formulae above coincide with (\ref{MVU0}).
	
\item (supplemental) For continuous $F$ and $G$ we have
	\begin{equation*}
		\int{(1-F(x))^2g(x)dx} = \int{\left(\int_{-\infty}^{x}{G(y)f(y)dy} + G(x)(1-F(x))\right)f(x)dx}.
	\end{equation*}
\end{enumerate}

\begin{proof}
$(i)$ and $(ii)$ See the Appendix of \cite{lehmann2006np}. \\
   
$(iii)$ Without loss of generality assume $\E(X) = 0$ so that $\E(Y) = -a$ because $G(x) = F(x+a)$. Due to symmetry of $F$ and $G$ and since $G(x) = F(x+a)$, we have $f(x) = g(-a -x)$ and $G(x) = 1-\tilde{F}(-a -x)$ for all $x \in \mathbb{R}$, which also means 
\begin{equation}
  f(x)G(x)^2 = g(-a-x)(1-\tilde{F}(-a-x))^2.
\end{equation}
In both integrals, integration is done over $\mathbb{R}$ so that the statement follows immediately.

$(iv)$ The statement is trivial for $\E(U_{mn})$. For $F = G$ we have $\pr(X \geq Y) = 0.5$ and hence
\begin{equation}
\label{var.equal}
  \int{g(x)(1-F(x))^2dx} = 1 - 2\pr(X \geq Y) + \int{g(x)F(x)^2dx} = \int{f(x)G(x)^2dx}.
\end{equation}
As per definition of the mean, we may write 
\begin{equation}
 \int{f(x)F(x)^2dx} = \E(F(X)^2).
\end{equation}
The distribution of the random variable $F(X)$ is known to be uniform in $[0,1]$, if $X$ has distribution $F$. Accordingly, it holds that
\begin{equation}
  \E(F(X)^a) = \int_{0}^{1}{x^a dx} = \frac{1}{a+1}.
\end{equation}
In particular for $F = G$, this means
\begin{equation}
 \int{f(x)G(x)^2dx} = \int{g(x)(1-F(x))^2dx} = \frac{1}{3},
\end{equation}
which directly leads to
\begin{align*}
 \Var(U_{mn}) &= mn \bigg( \frac{1}{2} + \frac{m+n-2}{3} - \frac{m+n-1}{4} \bigg) \\
&= \frac{mn(6+4(m+n-2)-3(m+n-1))}{12} \\
&= \frac{mn(m+n+1)}{12}.
\numberthis
\end{align*}
Note that this statement does not hold when $F = G$ is discrete since $\pr(X \geq Y) \neq 0.5$ in this case due to the presence of ties. \\
$(v)$ The proof relies on another derivation of a formula for $\Var(U_{mn})$: For $t \in \mathbb{N}$, $t \geq \max(m,n)$, we may write every sequence of $x_1,...,x_m$, $y_1,...,y_n$ ordered by size  as 
\begin{equation}
 X_1, Y_1, X_2, ..., X_t, Y_t 
\end{equation} 
with $X_i \in \{0,...,m\}$ representing $X_i$ values out of $x_i,...,x_m$ and $Y_i \in \{0,...,n\}$ representing $Y_i$ values out of $y_i,...,y_n$. For instance, suppose that $t = m = n = 3$ and that ordering the observations $x_i,y_i$ ($1 \leq i \leq 3$) have resulted in
\begin{equation}
\label{ex_5}
  x_1,x_2,y_1,y_2,x_3,y_3.
\end{equation}
Then, we could represent sequence (\ref{ex_5}) as
\begin{equation}
  2(=X_1), 2(=Y_1), 1(=X_2), 1(=Y_2), 0(=X_3), 0(=Y_3).
\end{equation}

 Note that $\sum_{i=1}^t{X_i} = m$ 
and $\sum_{i=1}^t{Y_i} = n$. We can now write $U_{mn}$ as
\begin{equation}
  U_{mn} = \sum_{i=2}^t{\sum_{j=1}^{i-1}{X_i Y_j }}.
\end{equation}
Assume that both $(X_1,...,X_t)$ and $(Y_1,...,Y_t)$ are multinomial distributed with probabilities $(f_1,...,f_t)$ and $(g_1,...,g_t)$ respectively. For every $X_i$, it holds that  
\begin{equation}
  \E(X_i) = mf_i \text{ and } \E(X_i^2) = mf_i(1-f_i+mf_i).
\end{equation}
For $i \neq k$ we have
\begin{equation}
  \E(X_iX_k) = m(m-1)f_if_k.
\end{equation}
The same goes, of course, for $Y_i$. \\

Defining $\widetilde{f}_{i,m} := f_i(1-f_i-mf_i)$ and  $\widetilde{g}_{i,n} := g_i(1-g_i-ng_i)$, the second non-central moment $\E(U_{mn}^2)$ equals
\begin{align*}
\label{MN}
	\E(U_{mn}^2) &= \E((\sum_{i=2}^t{\sum_{j=1}^{i-1}{X_i Y_j}})^2) \\
	&= \sum_{i=2}^t{\sum_{j=1}^{j-1}{\E(X_i^2)\E(Y_j^2)}} + 
		 \sum_{i=3}^t{\sum_{j,l < i \atop j \neq l}{\E(X_i^2)\E(Y_j Y_l)}} \\
	&\quad +	 \sum_{i, k \leq t \atop i \neq k}{\sum_{j < i, k}{\E(X_i X_k)\E(Y_j^2)}} 
	+ \sum_{i,k \leq t \atop i \neq k}{\sum_{j < i; \text{ } l < k \atop j \neq l}{\E(X_i X_k)\E(Y_j Y_l)}} \\
	&= mn \bigg( \sum_{i=2}^t{\sum_{j=1}^{i-1}{\widetilde{f}_{i,m} \widetilde{g}_{j,n} }} + 
	   (n-1) \sum_{i=3}^t{\sum_{j,l < i \atop j \neq l}{ \widetilde{f}_{i,m} g_j g_l }} \\
	&\quad +	(m-1) \sum_{i, k \leq t \atop i \neq k}{\sum_{j < i, k} { f_i f_k \widetilde{g}_{j,n} }}
  +  (m-1)(n-1)\sum_{i,k \leq t \atop i \neq k}{\sum_{j < i; \text{ } l < k \atop j \neq l}{ f_i f_k g_j g_l }} \bigg) \\
	&= mn \bigg( \sum_{i=2}^t{\sum_{j=1}^{i-1}{\widetilde{f}_{i,m} \widetilde{g}_{j,n} }} + 
	   (n-1) (\sum_{i=3}^t{\widetilde{f}_{i,m} ((\sum_{j < i}{g_j})^2 - \sum_{j < i}{ g_j^2 })}) \\
	&\quad +	 (m-1) (\sum_{i, k \leq t}{\sum_{j < i, k} { f_i f_k \widetilde{g}_{j,n} }} - \sum_{i \leq t}{\sum_{j < i} { f_i^2 \widetilde{g}_{j,n} }}) \\
	&\quad + (m-1)(n-1)(\sum_{i,k \leq t}{f_i f_k \sum_{j < i}{g_j} \sum_{l < k}{g_l}} - \sum_{i \leq t}{f_i^2 (\sum_{j < i}{g_j})^2} - \sum_{i,k \leq t}{\sum_{j < i,k}{ f_i f_k g_j^2}}) \bigg). \numberthis 
\end{align*}	
For non-empty finite sets $A_t, B_t$ with $|A_t| = |B_t| = t$ define 
\begin{equation}
\label{def_pS}
f_{A_t}(x) :=  \Bigg\{  \begin{array}{l} 
\frac{f(x)}{\sum_{x \in A_t}{f(x)}} \text{ if } x \in A_t \\
0 \qquad \qquad \text{else } \\
\end{array}  \quad
g_{B_t}(y) :=  \Bigg\{  \begin{array}{l} 
\frac{g(y)}{\sum_{y \in B_t}{g(y)}} \text{ if } y \in B_t \\
0 \qquad \qquad \text{else.} \\
\end{array} 
\end{equation}
We choose $A_t$ and $B_t$ so that  $x_i < y_i < x_{i+1} < y_{i+1}$ for all $x_i \in A_t$ and $y_i \in B_t$ as well as 
\begin{equation}
  \lim_{t \rightarrow \infty} \sum_{x_i \leq x}{f_{A_t}(x_i)} = F(x) \qquad \lim_{t \rightarrow \infty} \sum_{y_i \leq y}{g_{B_t}(y_i)} = G(y).
\end{equation}
Since $F$ and $G$ are continuous, we have
\begin{equation}
 \lim_{t \rightarrow \infty} f_{A_t}(x_i) = \lim_{t \rightarrow \infty} g_{B_t}(y_j) = 0. 
\end{equation}
Together with $\sum_{i=1}^t{f_{A_t}(x_i)} = \sum_{i=1}^t{g_{B_t}(y_i)} = 1$, it follows that
\begin{equation}
\label{sump0}
\lim_{t \rightarrow \infty} \sum_{i=1}^t{f_{A_t}(x_i)^2} = \lim_{t \rightarrow \infty} \sum_{i=1}^t{g_{B_t}(y_i)^2} = 0. 
\end{equation}
Setting $f_i := f_{A_t}(x_i)$, $g_i := g_{B_t}(y_i)$ in (\ref{MN}) and applying (\ref{sump0}) leads to
\begin{align*}
\E(U_{mn}^2) &= mn \lim_{t \rightarrow \infty} \bigg( \sum_{i=2}^t{f_{A_t}(x_i) \sum_{j < i}{g_{B_t}(y_j)}} + (n-1) \sum_{i=3}^t{f_{A_t}(x_i) (\sum_{j < i}{g_{B_t}(y_j)})^2} \\
 &\quad +  (m-1)\sum_{i, k \leq t}{f_{A_t}(x_i) f_{A_t}(x_k) \sum_{j < i, k} {{g_{B_t}(y_j)} }} \\
 &\quad + (m-1)(n-1)(\sum_{i,k \leq t}{f_{A_t}(x_i) f_{A_t}(x_k) \sum_{j < i}{g_{B_t}(y_j)} \sum_{l < k}{g_{B_t}(y_l)}} \bigg) \\  
&= mn \lim_{t \rightarrow \infty} \bigg( \sum_{i=2}^t{f_{A_t}(x_i) \sum_{j < i}{g_{B_t}(y_j)}} + (n-1) \sum_{i=3}^t{f_{A_t}(x_i) (\sum_{j < i}{g_{B_t}(y_j)})^2} \\
 &\quad +  (m-1)\sum_{i \leq t}{f_{A_t}(x_i) \sum_{k \leq t} f_{A_t}(x_k) \sum_{j < i, k} {{g_{B_t}(y_j)} }} \\
 &\quad + (m-1)(n-1) \sum_{i \leq t}f_{A_t}(x_i) \sum_{j < i}{g_{B_t}(y_j)} \sum_{k \leq t} f_{A_t}(x_k)  \sum_{l < k}g_{B_t}(y_l) \bigg). \numberthis
\end{align*}
Going from sums to integrals and simplifying yields
\begin{align*}
	\E(U_{mn}^2) &= mn \bigg( \int{f(x)G(x)dx} + (n-1)\int{f(x)G(x)^2dx} \\
	&\quad +(m-1) \int f(x) \int f(y) \, G(\min(x,y)) dy dx 	\\
	&\quad + (m-1)(n-1) \left(\int{f(x)G(x)dx}\right)^2 \bigg)\\
&= mn \bigg( \pr(X \geq Y) + (n-1)\int{f(x)G(x)^2dx} \\
	&\quad +(m-1) \int f(x) \left( \int_{-\infty}^{x} f(y)G(y)dy + G(x)(1-F(x)) \right) dx  \\
	&\quad -(m+n-1)\pr(X \geq Y)^2 \bigg) + \E(U_{mn})^2.	
	\numberthis
\end{align*}	
Since $\Var(U_{mn}) = \E(U_{mn}^2) - \E(U_{mn})^2$ the statement follows by comparing this variance formula to the formula in $(ii)$. 
\end{proof}


\bibliography{bib}{}
\bibliographystyle{apalike}

\begin{figure}[hbtp] 
  \centering
  \includegraphics[height=0.50\textheight]{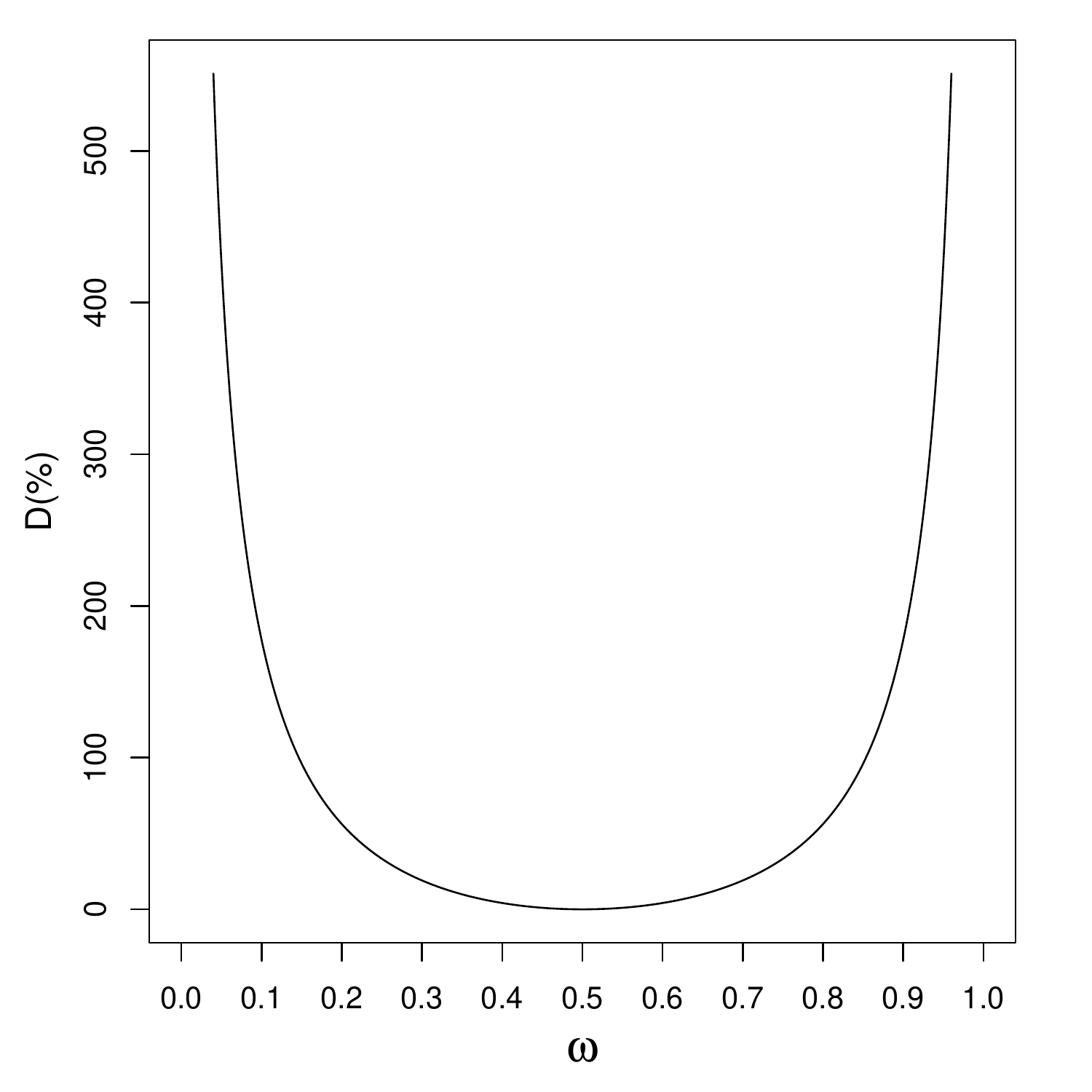}
  \caption{Graph of the approximate deficiency of the Wilcoxon-Mann-Whitney-test for symmetric and identically shaped distributions.}
  \label{fig:deficiency}
\end{figure}

\begin{figure}[hbtp] 
  \centering
  \includegraphics[height=0.50\textheight]{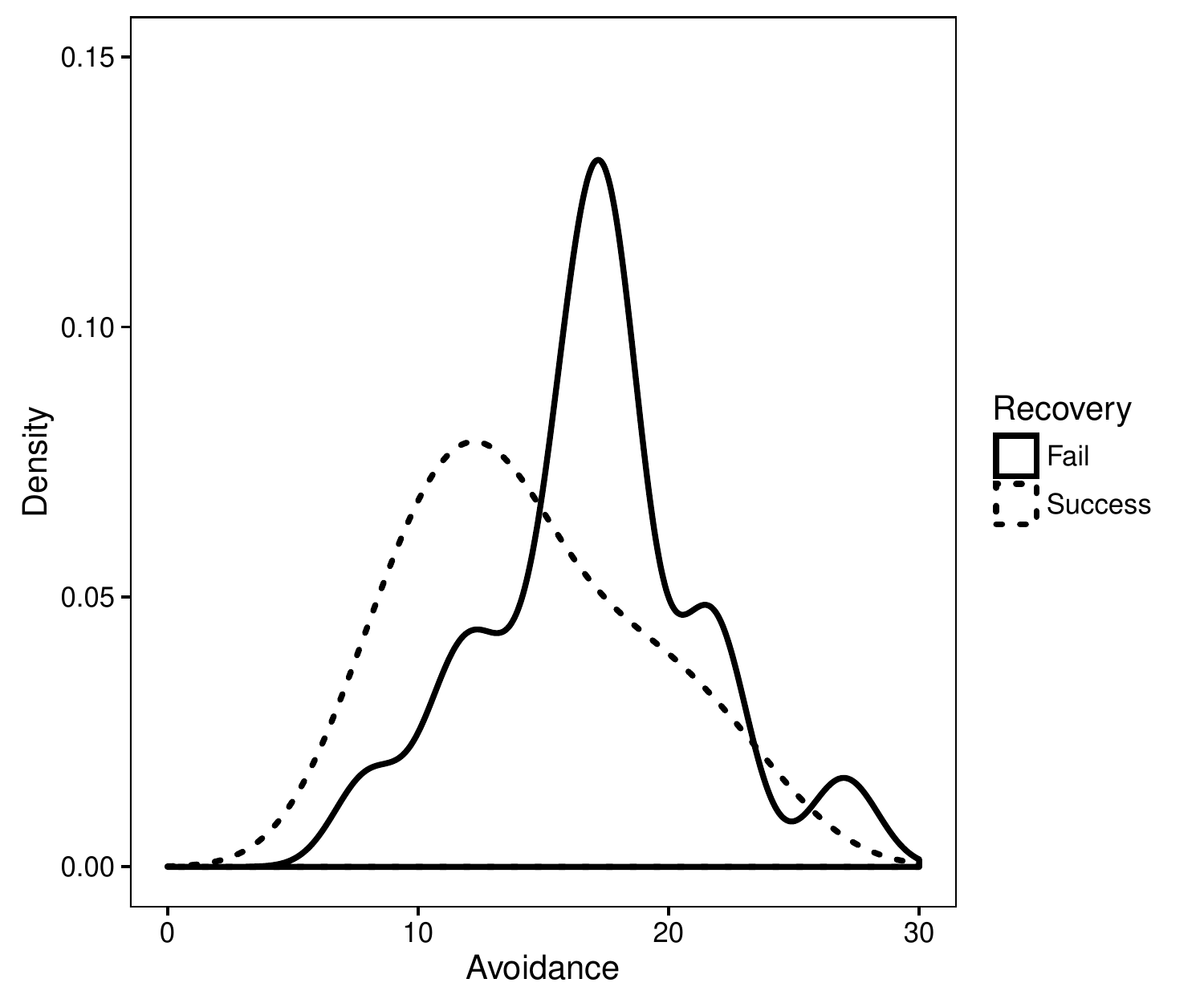}
  \caption{Smoothed densities of the distribution of cognitive avoidance for patients who managed to recover from cancer (Success) and for patients who did not recover (Fail) in the study of \cite{epping1994}. Details are provided in Section~\ref{section_general_hyp}.}
  \label{fig:avoidance}
\end{figure}

\begin{figure}[hbtp] 
  \subfigure[Normal densities corresponding to Subfigure (a), (b) and (c) in Figure \ref{fig:power_TU}.]{\includegraphics[width=0.49\textwidth]{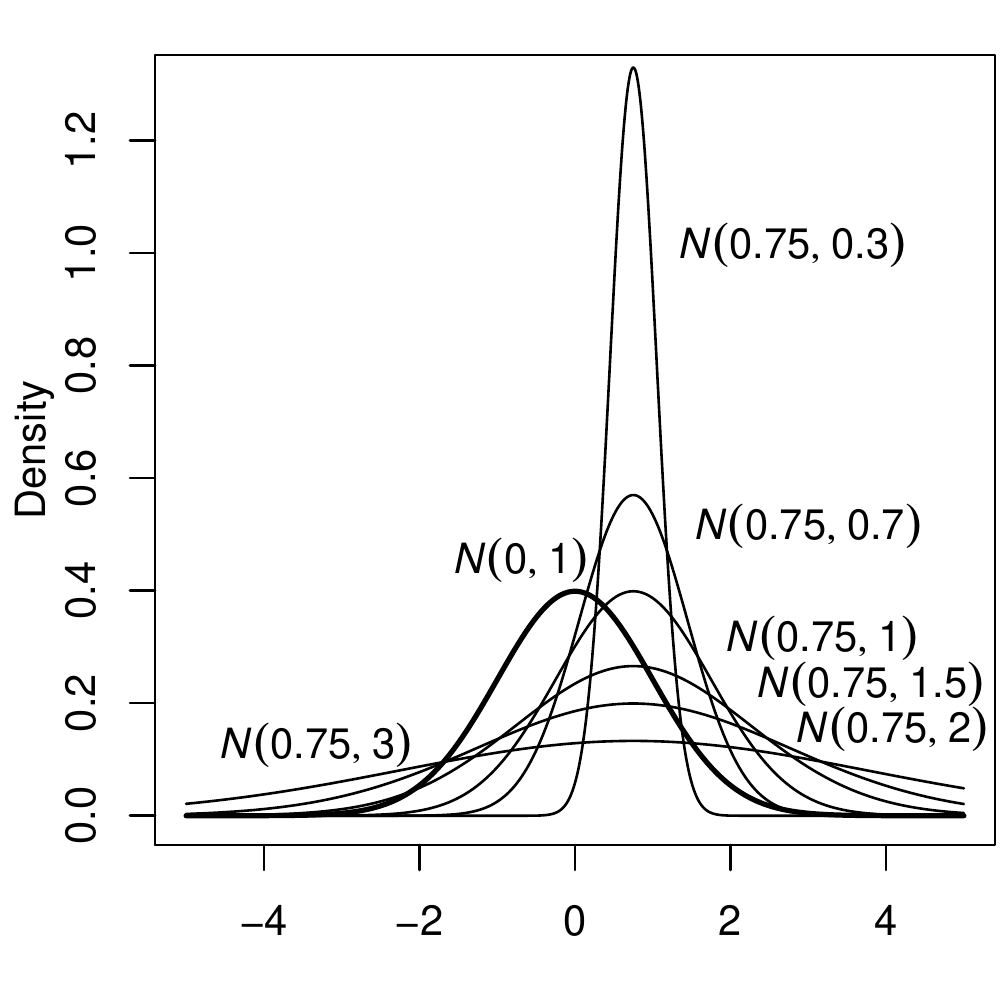}}
  \subfigure[Shifted skewed densities corresponding to Subfigure (d) and (e) in Figure \ref{fig:power_TU}.]{\includegraphics[width=0.49\textwidth]{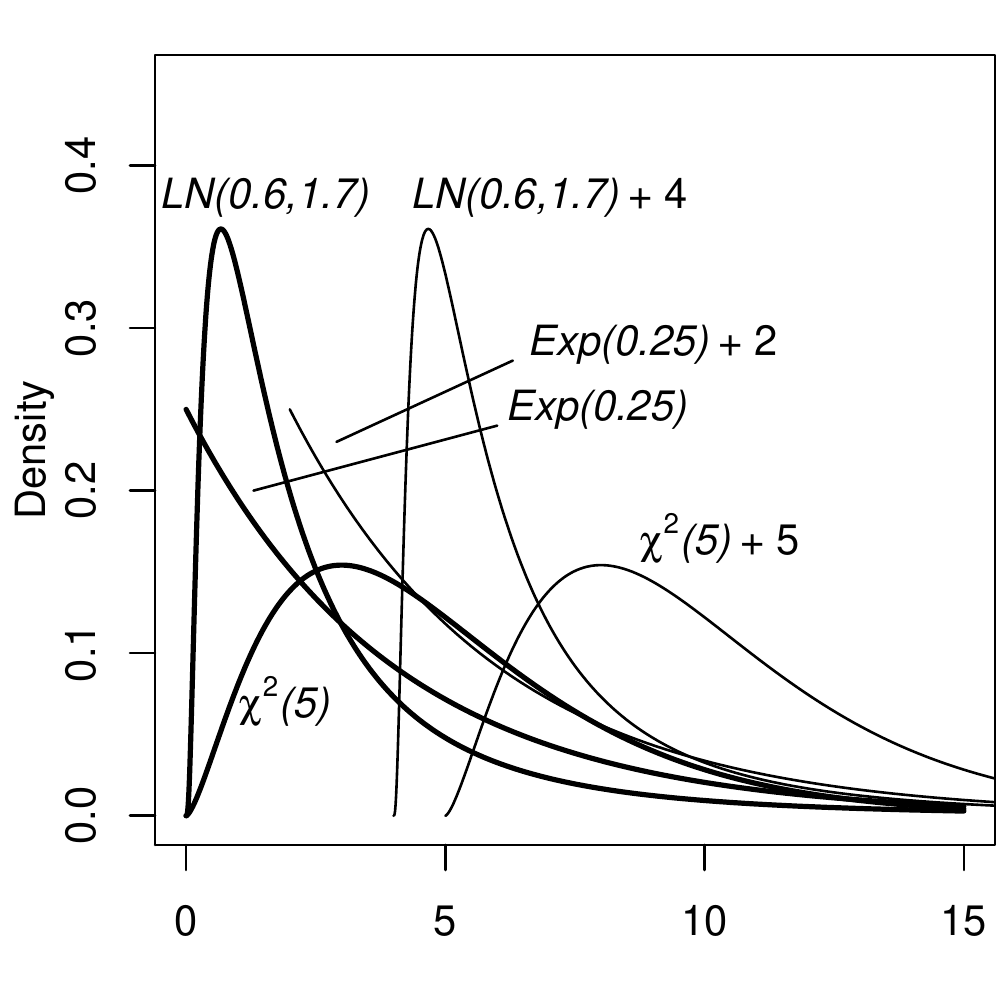}}
  \subfigure[Skewed densities corresponding to Subfigure (f) in Figure \ref{fig:power_TU}.]{\includegraphics[width=0.49\textwidth]{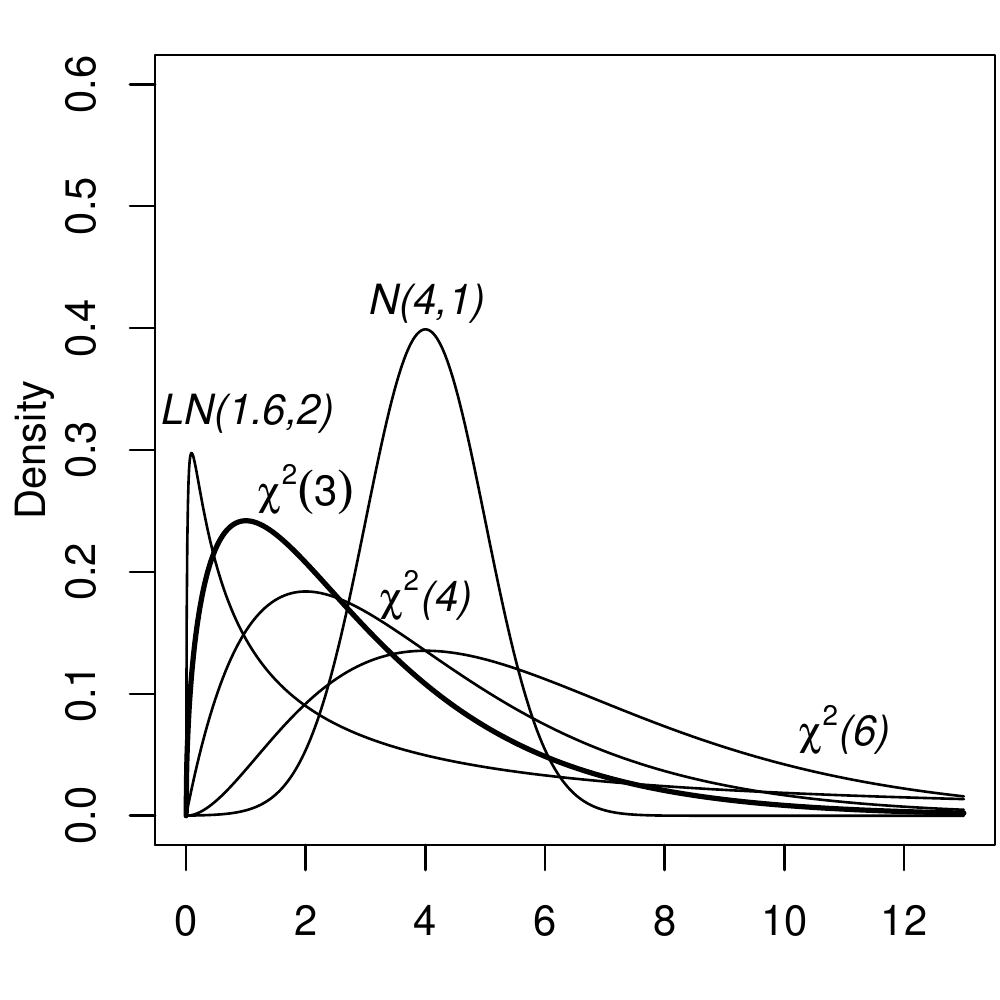}}
   \subfigure[Skewed densities corresponding to Subfigure (g) and (h) in Figure \ref{fig:power_TU}.]{\includegraphics[width=0.49\textwidth]{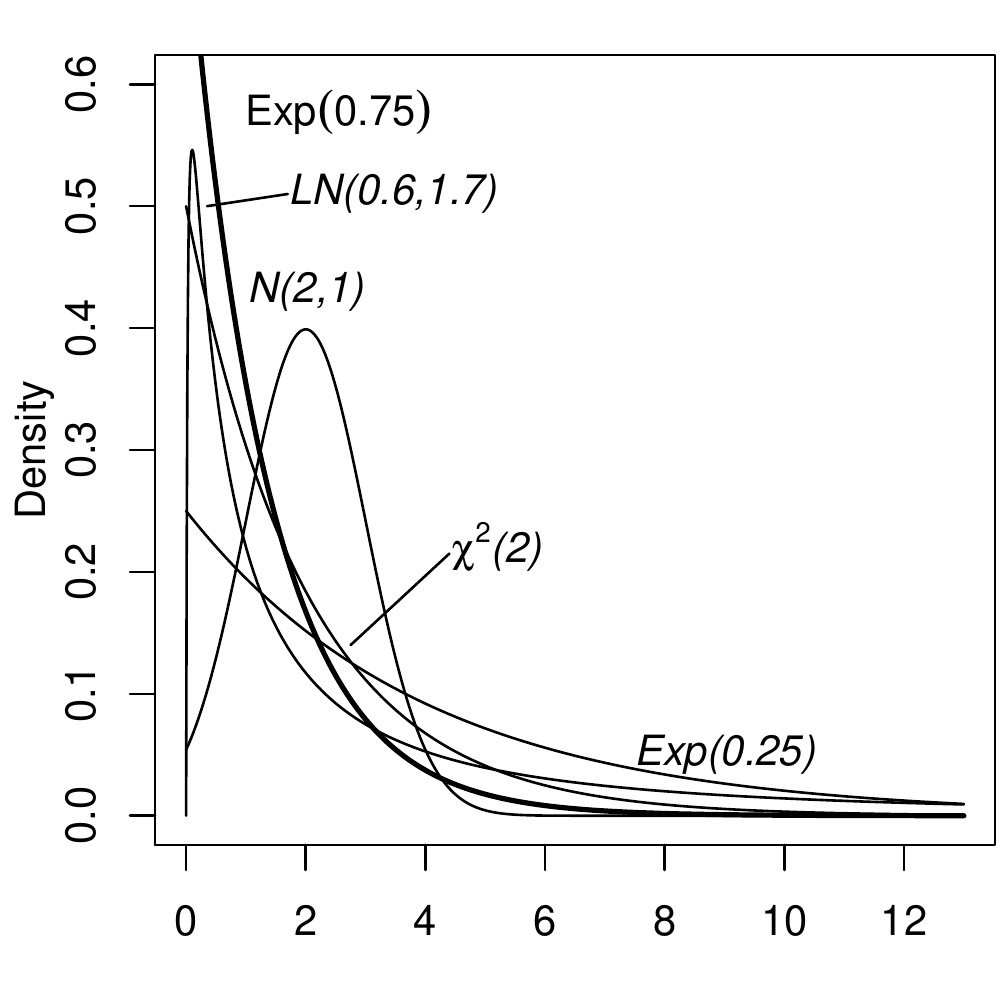}}
  \caption{Densities of the distributions used for the power calculations displayed in Figure \ref{fig:power_TU}. The density of $G$ is plotted in bold. Abbreviations: $\mathcal{N}(\mu,\sigma) =$ Normal distribution with mean $\mu$ and standard deviation $\sigma$; $\chi^2(df)=$ Chi-square distribution with $df$ degrees of freedom; $LN(\mu,\sigma) =$ Log-normal distribution with log-mean $\mu$ and log-standard deviation $\sigma$; $Exp(\lambda) =$ Exponential distribution with rate $\lambda$.} 
  \label{fig:dens}
\end{figure}

\begin{figure}[hbtp] 
  \subfigure[Varied $\sigma$: $\mathcal{N}(0.75,\sigma)$ vs. $\mathcal{N}(0,1)$. $D_U(0.5)$: $0-4\%$;  $D_{het}(0.5)$: $0-24\%$.]{\includegraphics[width=0.49\textwidth]{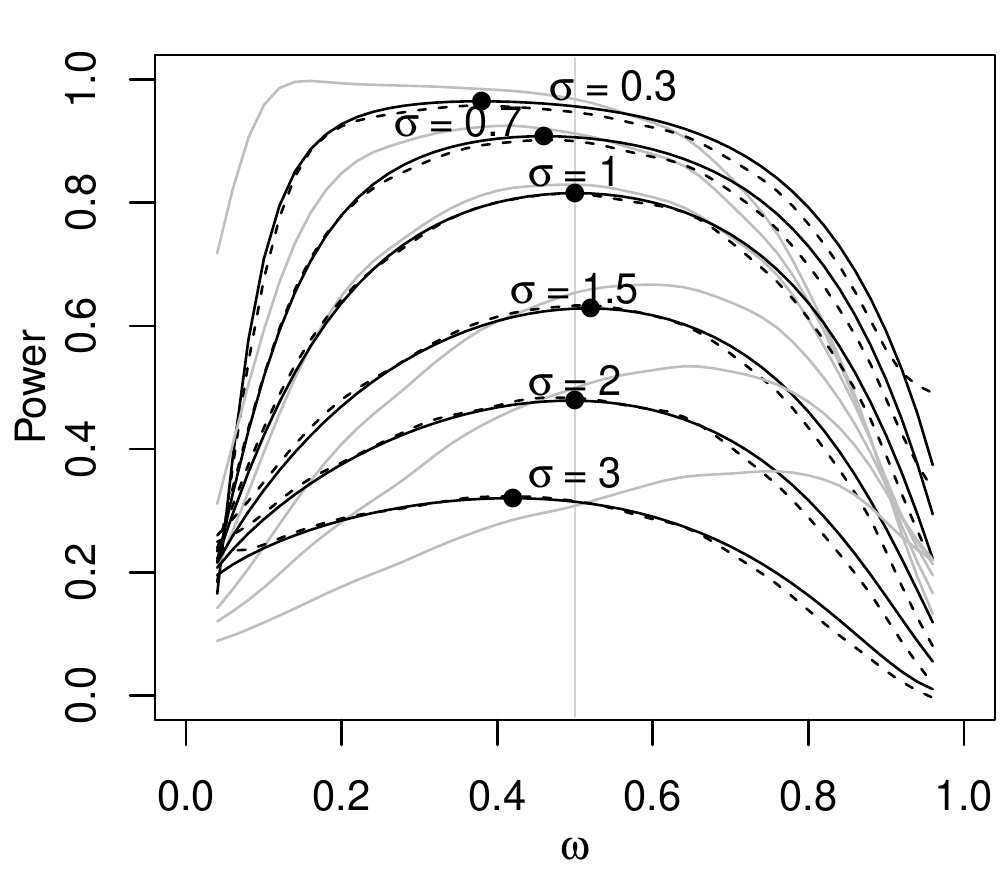}}
  \subfigure[Varied $N$: $\mathcal{N}(0.75,2)$ vs. $\mathcal{N}(0,1)$. $D_U(0.5)$: $0-2\%$;  $D_{het}(0.5)$: $0-12\%$.]{\includegraphics[width=0.49\textwidth]{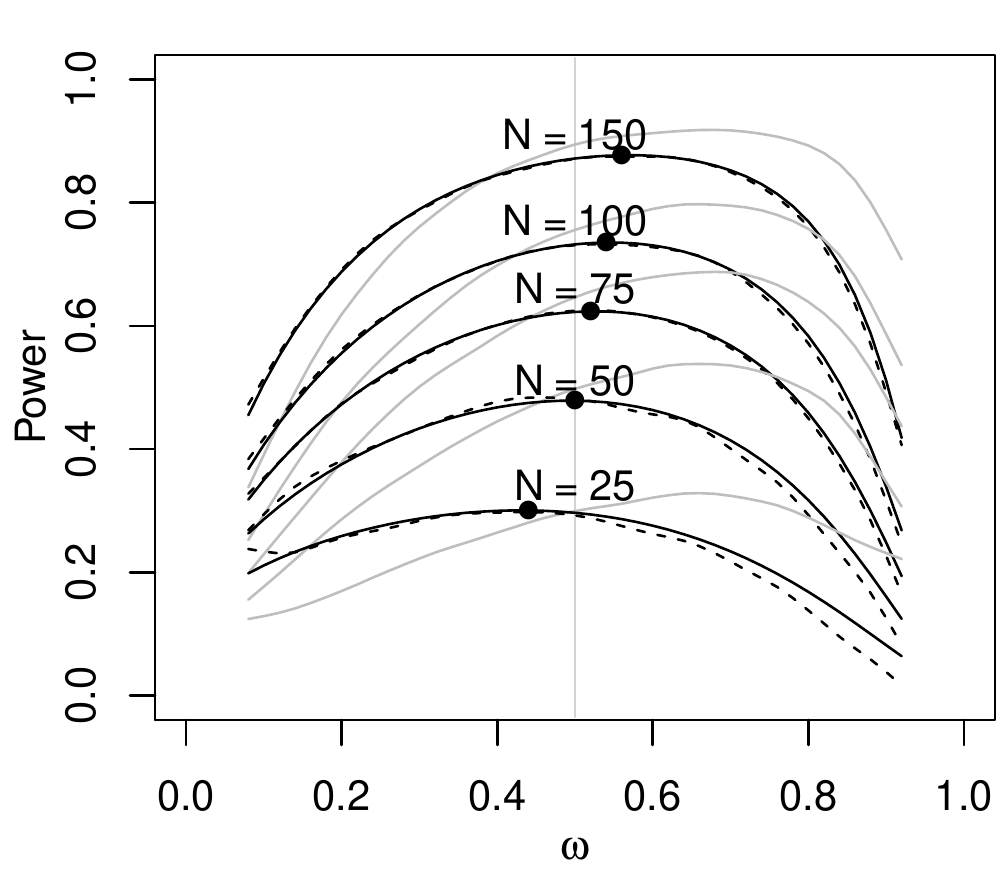}}
  \subfigure[Varied $\alpha$: $\mathcal{N}(0.75,0.7)$ vs. $\mathcal{N}(0,1)$. $D_U(0.5)$: $0-2\%$; $D_{het}(0.5)$: $0-10\%$.]{\includegraphics[width=0.49\textwidth]{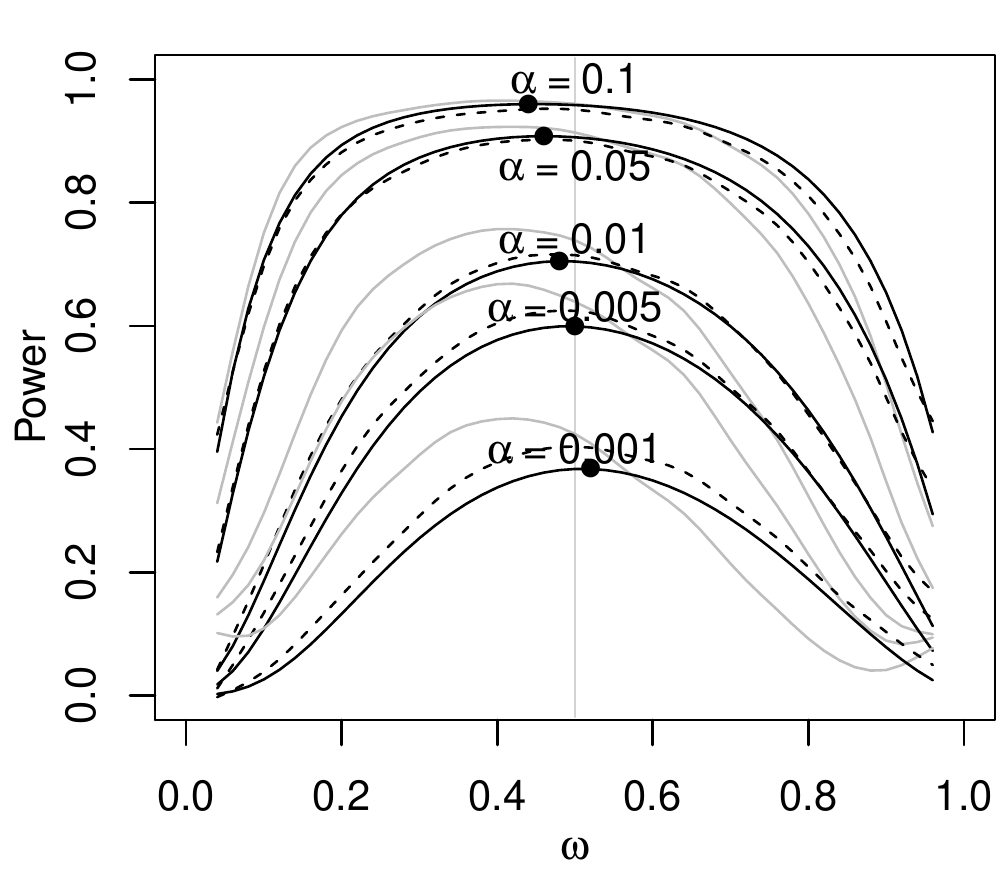}}
  \subfigure[Varied $F$ and $a$: $G(x) = F(x + a)$. $D_U(0.5)$: $0-2\%$.]{\includegraphics[width=0.49\textwidth]{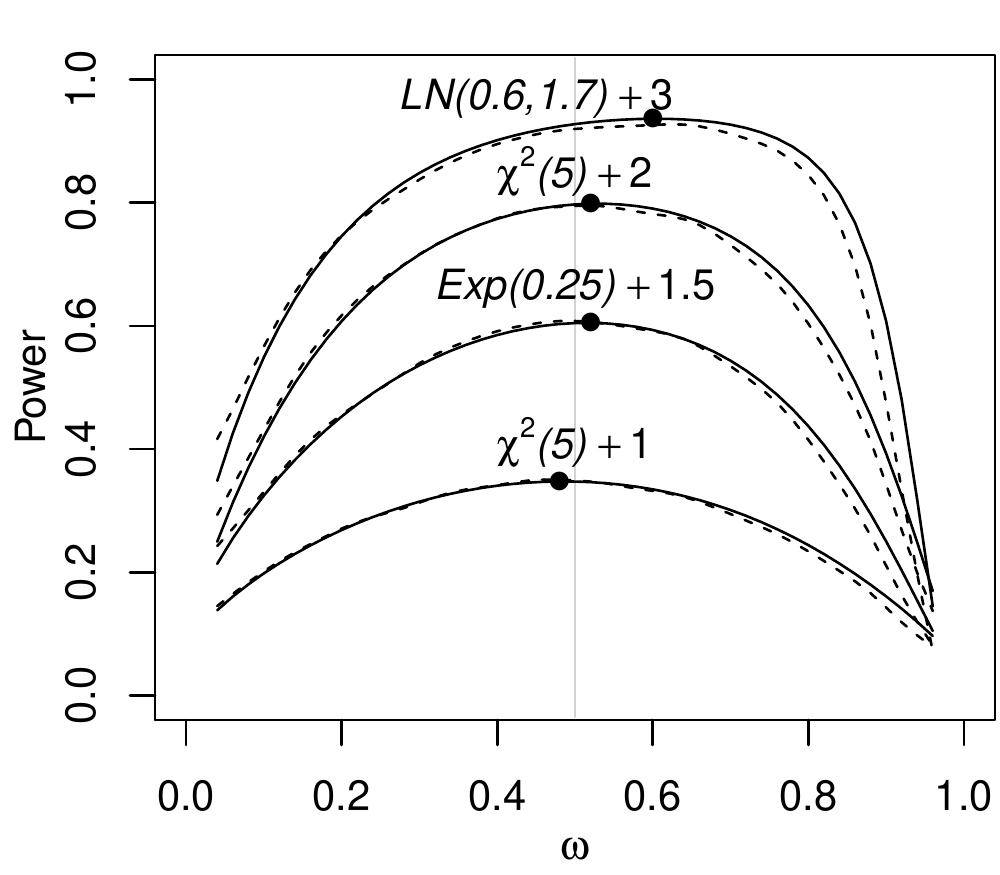}}
  \subfigure[Varied $N$: $\chi^2(5) + 1.5$ vs. $\chi^2(5)$.  $D_U(0.5)$: $0-2\%$.]{\includegraphics[width=0.49\textwidth]{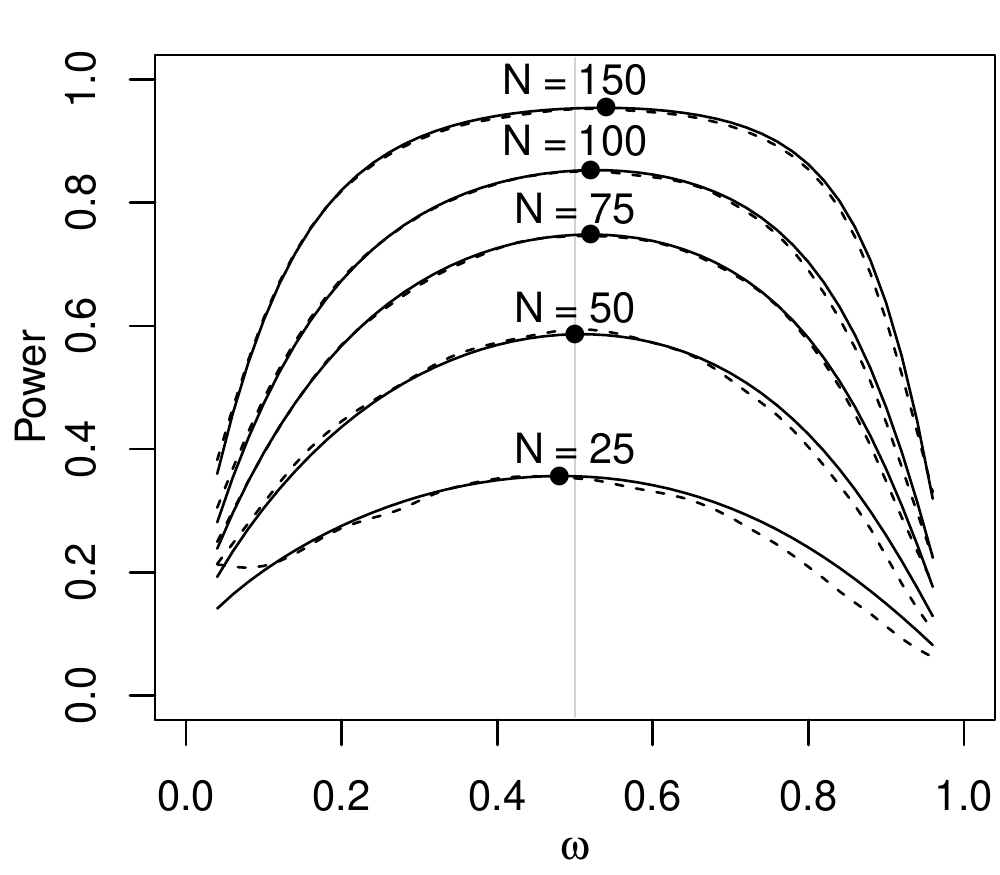}}
  \subfigure[Varied $F$: $G = \chi^2(3)$. $D_U(0.5)$: $0-2\%$.]{\includegraphics[width=0.49\textwidth]{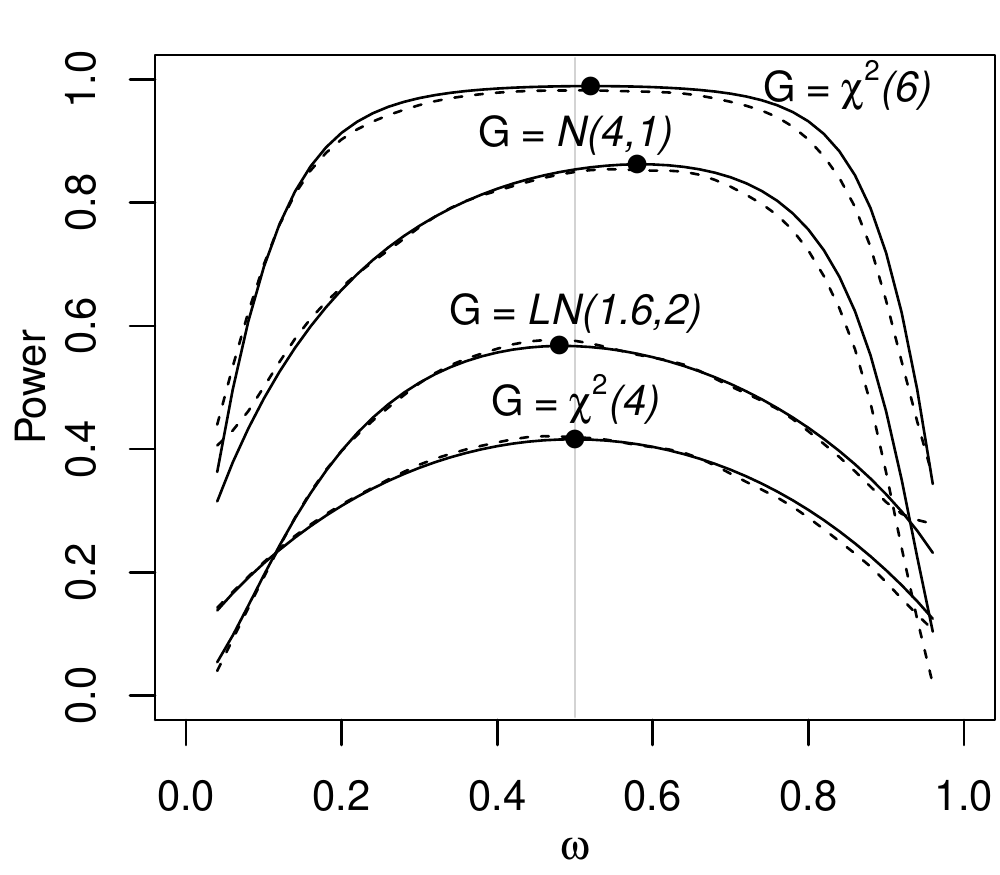}}
\end{figure}
\begin{figure}[hbtp]
 \subfigure[Varied $F$: $G = Exp(0.75)$.  $D_U(0.5)$: $0-2\%$.]{\includegraphics[width=0.49\textwidth]{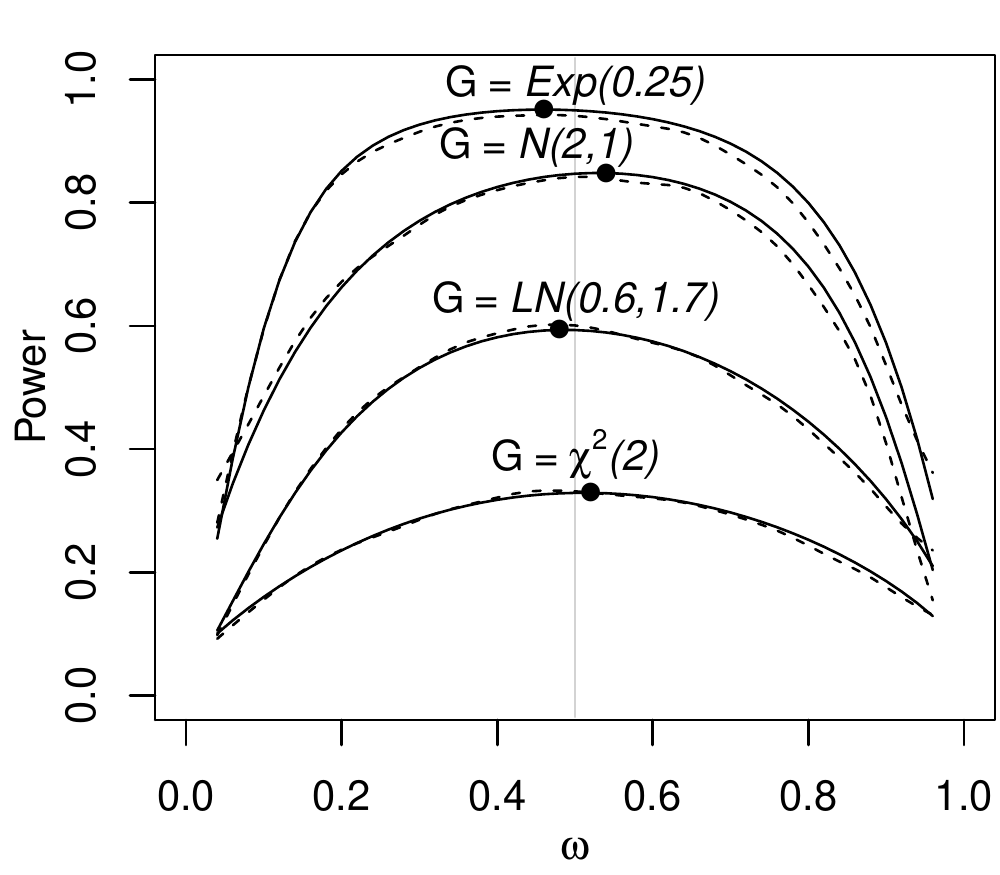}}
  \subfigure[Varied $\alpha$: $Exp(0.25)$ vs. $Exp(0.75)$.  $D_U(0.5)$: $0-2\%$.]{\includegraphics[width=0.49\textwidth]{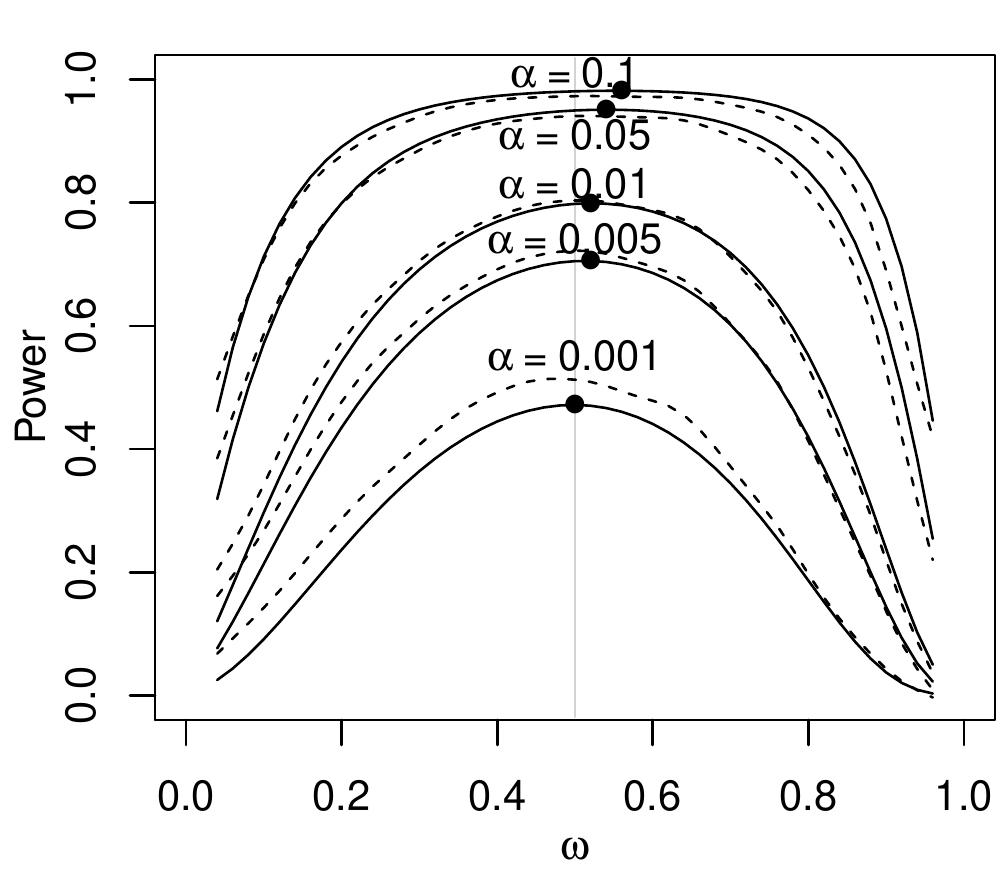}} 
   \subfigure{\includegraphics[width=0.99\textwidth]{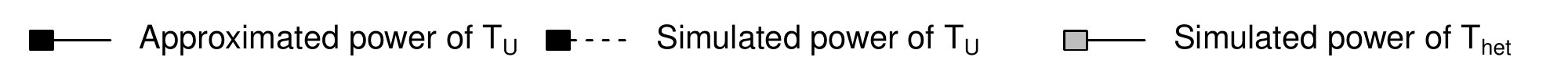}} 
   \subfigure{\includegraphics[width=0.99\textwidth]{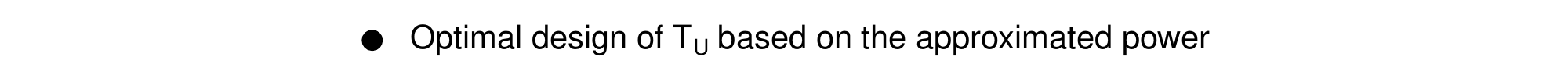}}  
  
  \caption{Power of $T_U$ for $N=50$ and $\alpha=0.05$ unless otherwise specified. Vertical grey lines indicate the position of the design $\omega = 0.5$. $D_U(0.5)$ and $D_{het}(0.5)$ indicate the deficiency of $\omega = 0.5$ relative to the optimal design for $T_U$ and $T_{het}$ respectively.  Abbreviations: $\mathcal{N}(\mu,\sigma) =$ Normal distribution with mean $\mu$ and standard deviation $\sigma$; $\chi^2(df)=$ Chi-square distribution with $df$ degrees of freedom; $LN(\mu,\sigma) =$ Log-normal distribution with log-mean $\mu$ and log-standard deviation $\sigma$; $Exp(\lambda) =$ Exponential distribution with rate $\lambda$.} 
	\label{fig:power_TU}
\end{figure}

\begin{figure}[hbtp] 
  \centering
  \includegraphics[height=0.50\textheight]{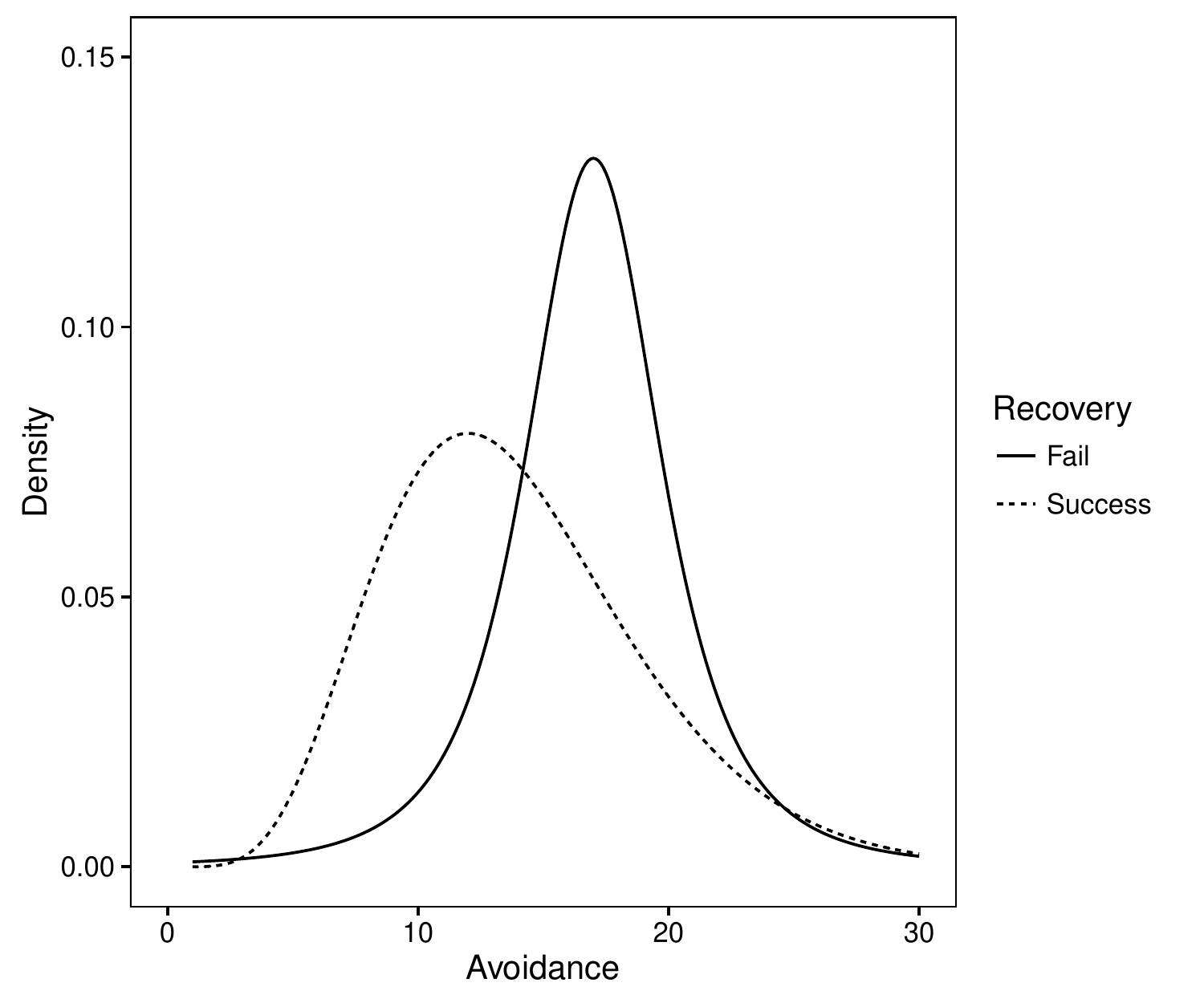}
  \caption{Approximated densities of the distribution of cognitive avoidance for patients who managed to recover from cancer (Success) and for patients who did not recover (Fail) in the study of \cite{epping1994}. Details are provided in Section~\ref{section_general_hyp}.}
  \label{fig:example_approx_dens}
\end{figure}

\begin{figure}[hbtp] 
  \centering
  \subfigure{\includegraphics[height=0.50\textheight]{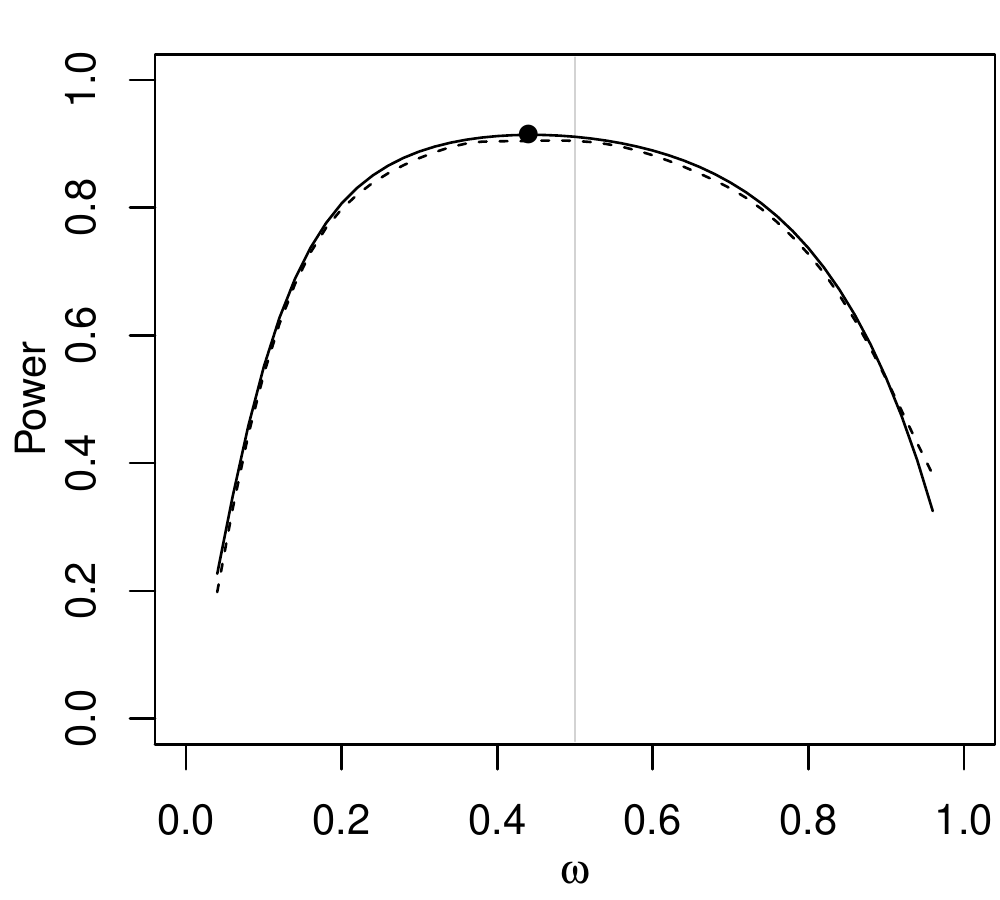}}
  \subfigure{\includegraphics[width=0.99\textwidth]{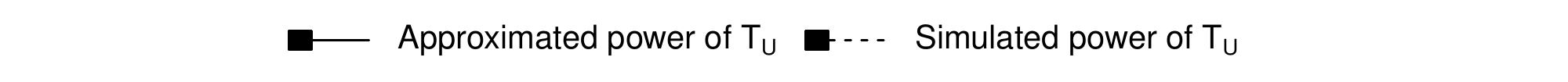}} 
   \subfigure{\includegraphics[width=0.99\textwidth]{Grafiken/legend_dots.pdf}}
  \caption{Power of $T_U$ applied to the study of \cite{epping1994}. The vertical grey line indicates the position of the design $\omega = 0.5$. Deficiency of $\omega = 0.5$ relative to the optimal design: $1\%$. More details are provided in Section~\ref{section_general_hyp}.}
  \label{fig:example_power}
\end{figure}

\end{document}